\begin{document}
\newcommand{\mbold}[1]{\mbox{\boldmath $#1$}}
\newcommand{\sbold}[1]{\mbox{\boldmath ${\scriptstyle #1}$}}
\newcommand{\tr}{\,{\rm tr}\,}
\renewcommand{\arraystretch}{1.5}
\newtheorem*{theorem}{Theorem}
\title{Quantum-state comparison and discrimination}
\author{A.~Hayashi, T.~Hashimoto, and M.~Horibe}
\affiliation{Department of Applied Physics, 
          University of Fukui, Fukui 910-8507, Japan}
\begin{abstract}
We investigate the performance of discrimination strategy in the comparison task of 
known quantum states.
In the discrimination strategy, one infers whether or not two quantum systems are in the 
same state on the basis of the outcomes of separate discrimination measurements on 
each system.  In some cases with more than two possible states, the optimal strategy in 
minimum-error comparison is that one should infer the two systems are in different 
states without any measurement, implying that the discrimination strategy performs worse 
than the trivial ``no-measurement''  strategy. We present a sufficient condition for 
this phenomenon to happen. 
For two pure states with equal prior probabilities, we determine the optimal comparison 
success probability with an error margin, which interpolates the minimum-error and 
unambiguous comparison. We find that the discrimination strategy is not optimal except 
for the minimum-error case.  
\end{abstract}

\pacs{PACS:03.67.Hk}
\maketitle

\section{Introduction} \label{sec_introduction}
The laws of quantum mechanics do not allow one to distinguish nonorthogonal 
quantum states perfectly \cite{Helstrom76, Holevo82, Chefles00, Nielsen_text_book}.  
First, this is because of the statistical nature of 
quantum measurement, which generally destroys the state of the system, and, 
further, one cannot clone an unknown quantum states \cite{Wootters82}. 

Quantum state comparison is one of the problems which are directly related to this 
nature of quantum mechanics 
\cite{Barnett03, Klenmann05,  Chefles04, Filippov12, Pang11, Andersson06,Olivares10}. Suppose we are given two quantum systems, and the task is 
to optimally infer whether or not the two systems are in the same state. 
We can consider two different settings of the problem. 
One is the case in which the possible states are unknown; 
that is, we have no classical knowledge on the states. The other is the case in which 
the state is selected from a known set of states with some prior probabilities. 
We concentrate on the latter case.  

Suppose two states are independently selected from a set of two known pure states.  
In Ref.~\cite{Barnett03}, Barnett {\it et al}.  showed that the optimal comparison 
in the minimum-error scheme is attained by the discrimination strategy, by which we 
mean we separately perform the optimal discrimination 
measurement on each system, and if the two outcomes are equal we 
infer that the two systems are in the same state, and otherwise they are 
in different states. On the other hand, they showed that the optimal comparison 
in the unambiguous scheme requires a collective measurement on the whole system, 
and cannot be attained by the unambiguous discrimination 
\cite{Ivanovic87, Dieks88, Peres88, Jaeger95} strategy.  
This conclusion was subsequently generalized to the unambiguous comparison of two pure 
states with arbitrary prior probabilities  \cite{Klenmann05}. 

In this paper, we further investigate the performance of the discrimination strategy 
in the comparison task of known quantum states. First we show that 
the discrimination strategy is optimal in general two-state minimum-error comparison; 
that is, for generally mixed two states with arbitrary prior probabilities (Sec.~\ref{sec_two_state}).  Then we study the comparison task involving more than two pure 
states (Sec.~\ref{sec_many}). In some cases, we find that the optimal strategy is trivial; 
that is, we should simply infer that two systems are in different states without performing 
any measurement (``no-measurement'' strategy). In these cases, the discrimination 
strategy performs worse than the ``no-measurement'' strategy, and is a waste of effort.  
The condition for this phenomenon to occur is also discussed. 
 
As stated previously, for a two-state comparison, the discrimination strategy is not optimal 
in the unambiguous scheme, whereas it is optimal in the minimum-error setting.   
We determine the optimal comparison success probability with an error margin 
for two pure states with equal prior probabilities (Sec.~\ref{sec_error_margin}). 
This error-margin scheme interpolates the minimum-error and unambiguous settings 
\cite{Touzel07,Hayashi08,Sugimoto09}. 
We find that the discrimination strategy is optimal only in the minimum-error case; that is, 
the optimal comparison requires collective measurement on the whole system as 
long as the error-margin condition is active. 
 
\section{Two-state comparison and discrimination strategy} 
\label{sec_two_state}
Suppose we are given two quantum systems, which are independently
prepared in one of two known pure states $\ket{\phi_1}$ and $\ket{\phi_2}$ 
with equal prior probabilities.  The state of the combined system is thus 
either one of $\ket{\phi_1,\phi_1}$, $\ket{\phi_1,\phi_2}$,
$\ket{\phi_2,\phi_1}$, or $\ket{\phi_2,\phi_2}$ with probabilities 
$1/4$. The task is to infer whether the two systems are in the same state or not. 
This problem has been addressed and solved by Barnett {\it et al.} in Ref.~\cite{Barnett03}. 
In this section, we first reproduce their results, and then extend the results to a more 
general case. 

Measurement is described by a positive operator-valued measure (POVM), 
$\{E_=, E_{\ne}\}$, where measurement outcome ``$=$'' corresponds to the 
guess that two systems are in the same state whereas outcome ``$\ne$'' 
means their states are different. The probability of success in this state 
comparison is given by
\begin{align}
  P_\circ =& \frac{1}{4}\sum_{k=1}^2 \bra{\phi_k, \phi_k}E_=\ket{\phi_k, \phi_k}
                                  \nonumber \\ 
     &+\frac{1}{4}\sum_{k,j=1(k \ne j)}^{2} 
                                          \bra{\phi_k, \phi_j}E_{\ne} \ket{\phi_j, \phi_k}.
                         \label{eq_success}
\end{align}
In this section we adopt the minimum error scheme; we maximize the success 
probability $P_\circ$ without any constraint on the probability of an erroneous guess. 
Using $E_{\ne} = \mbold{1}-E_=$, we write the success probability $P_\circ$ as
\begin{align}
  P_\circ = \frac{1}{2} + \tr E_= \Lambda,
\end{align}
where 
\begin{align}
 \Lambda =& \frac{1}{4} \sum_{k=1}^2 
             \ket{\phi_k}\bra{\phi_k}\otimes \ket{\phi_k}\bra{\phi_k}
                                   \nonumber \\
                   &- \frac{1}{4} \sum_{k,j=1(k \ne j)}^2 
             \ket{\phi_k}\bra{\phi_k} \otimes  \ket{\phi_j}\bra{\phi_j}.
\end{align}
Since $0 \le E_= \le\mbold{1}$, the maximal value of $\tr E_= \Lambda$ is 
given by the sum of all positive eigenvalues of the operator $\Lambda$.  

To obtain eigenvalues of $\Lambda$, it suffices to work in the two-dimensional 
space spanned by the states  $\ket{\phi_1}$ and $\ket{\phi_2}$.   
Introducing Bloch vectors $\mbold{n}_k$ for the states $\ket{\phi_k}$, we have  
\begin{align}
   \ket{\phi_k}\bra{\phi_k} = 
   \frac{ \mbold{1} + \mbold{n}_k \cdot \mbold{\sigma} }{2}, 
                              \label{eq_Bloch}
\end{align}
where $\mbold{\sigma} = (\sigma_x,\sigma_y,\sigma_z)$ are the Pauli 
matrices. With this relation the operator $\Lambda$ takes the form
\begin{align}
    \Lambda = \frac{1}{16}(\mbold{n}_1-\mbold{n}_2) \cdot \mbold{\sigma}
                        \otimes (\mbold{n}_1-\mbold{n}_2) \cdot \mbold{\sigma}. 
\end{align}
It is now easy to obtain the eigenvalues of $\Lambda$, since the eigenvalues of 
$(\mbold{n}_1-\mbold{n}_2) \cdot \mbold{\sigma}$ are given by 
$\pm |\mbold{n}_1-\mbold{n}_2|$. Thus we find the optimal success probability 
is given by 
\begin{align}
  P^{\rm opt}_\circ = \frac{1}{2}+\frac{1}{8} |\mbold{n}_1-\mbold{n}_2|^2
                                 = 1-\frac{1}{2}|\braket{\phi_1|\phi_2}|^2, 
                     \label{eq_minimumerror}
\end{align} 
which is attained when $E_=$ is the projector onto the subspace spanned by the 
eigenstates of $\Lambda$ with positive eigenvalues.

Let us examine the obtained results more closely. The optimal comparison is realized 
when $E_=$ is given by
\begin{align}
    E_= = e_1 \otimes e_1 + e_2 \otimes e_2,  \label{eq_disc_str}
\end{align}
where $e_1$ and $e_2$ are the projectors onto the eigenspaces of 
$(\mbold{n}_1-\mbold{n}_2) \cdot \mbold{\sigma}$ with positive and negative 
eigenvalues, respectively.  
Now recall the minimal-error discrimination problem
between the two pure states $\ket{\phi_1}$ and $\ket{\phi_2}$ with equal 
prior probabilities. The optimal measurement in this discrimination problem is given 
by the POVM $\{e_1,e_2\}$, where $e_k$ corresponds to the guess that the state is 
$\ket{\phi_k}$.  
This implies that the optimal state comparison under consideration is reduced to
the optimal discrimination; we separately perform the optimal discrimination 
measurement on each system of the two, and if the two outcomes are equal we 
infer that the two systems are in the same state, and otherwise they are 
in different states. 

We can show that this conclusion holds in more general case: a state comparison of 
two mixed states $\rho_1$ and $\rho_2$ with arbitrary prior probabilities 
$\eta_1$ and $\eta_2$, respectively. In this case the success probability of 
comparison is written as
\begin{align}
  P_\circ =& \tr E_= \left(
             \eta_1^2 \rho_1 \otimes \rho_1 + \eta_2^2 \rho_2 \otimes \rho_2 
                                 \right)                        \nonumber \\
                & + \tr E_{\ne} \left(
             \eta_1\eta_2  \rho_1 \otimes \rho_2 + \eta_2\eta_1 \rho_2 \otimes \rho_1
                                          \right)               \nonumber \\
               =& 2\eta_1\eta_2 + \tr E_= 
                       (\eta_1\rho_1-\eta_2\rho_2) \otimes  (\eta_1\rho_1-\eta_2\rho_2). 
\end{align}
The optimal POVM element $E_{=}$ is clearly given in the form of Eq.~(\ref{eq_disc_str}) 
with $e_1$ ($e_2$) being the projector onto the eigenspace of 
$\eta_1\rho_1-\eta_2\rho_2$ with positive (negative) eigenvalues. 
This POVM $\{e_1,e_2\}$ is the optimal POVM in the discrimination problem between 
$\rho_1$ and $\rho_2$ with prior probabilities $\eta_1$ and $\eta_2$, respectively. 
This is evident since the success probability $Q_\circ$  of this discrimination problem is written as 
\begin{align}
   Q_\circ &= \eta_1\tr e_1\rho_1 + \eta_2\tr e_2 \rho_2
                                                         \nonumber \\
               &= \eta_2  + \tr e_1(\eta_1\rho_1-\eta_2\rho_2). 
\end{align} 
In conclusion, the discrimination strategy is optimal in the general two-state 
minimum-error comparison.  
\section{Many-states comparison}
\label{sec_many} 
In the preceding section we have shown that the discrimination strategy is optimal in the 
general two-state minimum-error comparison. It is interesting whether this result holds in 
the minimum-error comparison of more than two states. 
To investigate this issue, we take an example of minimum-error comparison involving 
$N \ge 2$ states.   

In a two dimensional space,  we consider the following $N$ pure states:
\begin{align}
    \ket{\phi_k} = U_k \ket{\phi}, (k=0,1,\cdots,N-1), 
\end{align}
where the initial state $\ket{\phi}$ is given by 
\begin{align}
  \ket{\phi} = \frac{\ket{0}+\ket{1}}{\sqrt{2}},
\end{align}
and the phase shift operator $U_k$ is defined as  
\begin{align}
  \left\{ \begin{array}{ll}
                   U_k \ket{0} &= \ket{0}  \\
                   U_k \ket{1} &= e^{i\frac{2\pi}{N}k}\ket{1}  \\
              \end{array}
  \right. .
\end{align}
Note that $\{U_k\}_{k=0}^{N-1}$ is a unitary representation of $\mathbb{Z}_N$, 
consisting of two inequivalent one-dimensional irreducible representations. One 
irreducible representation space is spanned by $\ket{0}$, and the other by $\ket{1}$. 
The discrimination problem of these $N$ states with equal prior probabilities has been  
analyzed in Ref. \cite{Hashimoto10}.  The optimal minimum-error discrimination success 
probability $Q^{\rm opt}_{\circ}$ is found to be 
\begin{align}
    Q^{\rm opt}_{\circ} = \frac{2}{N},
\end{align}
with the optimal POVM $e_k^{\rm opt}$, corresponding to the guess that the state is 
$\ket{\phi_k}$, given by
\begin{align}
   e_k^{\rm opt} = \frac{2}{N} \ket{\phi_k}\bra{\phi_k}.
\end{align}

Now we consider the comparison problem of the $N$ states $\ket{\phi_k}$; the two systems 
are prepared in one of those states independently with equal prior probabilities, and the 
task is to guess whether the two systems are in the same state or not.   As in the preceding 
section, we write the comparison success probability $P_{\circ}$ as 
\begin{align}
  P_{\circ} = 1 - \frac{1}{N} + \tr E_{=}\Lambda,
\end{align}
where  
\begin{align}
    \Lambda =& \frac{1}{N^2} \sum_{k=0}^{N-1} 
             \ket{\phi_k}\bra{\phi_k}\otimes \ket{\phi_k}\bra{\phi_k}
                                   \nonumber \\
                   &- \frac{1}{N^2} \sum_{k,j=0(k \ne j)}^{N-1} 
             \ket{\phi_k}\bra{\phi_k} \otimes  \ket{\phi_j}\bra{\phi_j}.
                             \label{eq_Lambda}
\end{align}
The optimal POVM $E_{=}^{\rm opt}$ is the projector onto the eigenspace of $\Lambda$ 
with positive eigenvalues. 
In order to calculate $\Lambda$, we again use the relation 
Eq.~(\ref{eq_Bloch}) where the Bloch vector for state $\ket{\phi_k}$ is now given by 
\begin{align}
    \mbold{n}_k &= \left( n_k^x, n_k^y, n_k^z \right) \nonumber \\ 
       &= \left( \cos \frac{2\pi}{N}k, \sin \frac{2\pi}{N}k, 0 \right). 
\end{align}
The following formulas of Bloch vectors are useful for performing the summation over 
$k$ and $j$ in the expression of $\Lambda$:  
\begin{align}
  & \sum_{k=0}^{N-1} n_k^x= 
              \sum_{k=0}^{N-1} n_k^y  = 0 , \\
  & \sum_{k=0}^{N-1} n_k^x n_k^x =
      \left\{ \begin{array}{ll}
                         2   ,  & ( N =2 )  \\
                          N/2      ,  & ( N \ge 3) \\
                  \end{array}
      \right.,   \\
  & \sum_{k=0}^{N-1} n_k^y n_k^y =
      \left\{ \begin{array}{ll}
                         0   ,  & ( N =2 )  \\
                          N/2      ,  & ( N \ge 3) \\
                  \end{array}
      \right.,   \\
   &  \sum_{k=0}^{N-1} n_k^x n_k^y = 0. 
\end{align}
We find that $\Lambda$ is expressed as follows: 
\begin{align}
   \Lambda = \left\{ 
      \begin{array}{l} 
        \frac{1}{4}\sigma_x \otimes \sigma_x,\   (N=2) \\
        \frac{1}{4N} (\sigma_x \otimes \sigma_x + \sigma_y \otimes \sigma_y)
       + \frac{1}{2N}-\frac{1}{4},\   (N \ge 3) \\
      \end{array}
                     \right. .
\end{align}

When $N=2$, the problem is trivial since the states $\ket{\phi_0}$ and $\ket{\phi_1}$ 
are orthogonal and can be perfectly discriminated. In fact  
the sum of positive eigenvalues of $\Lambda$ is 1/2, and we 
obtain $P_{\circ}^{\rm opt}=1$ with the optimal POVM given by the 
discrimination strategy 
\begin{align}
     E^{\rm opt}_{=} = E^{\rm disc}_{=} \equiv 
            \sum_{k=0}^1 e_k^{\rm opt} \otimes  e_k^{\rm opt}. 
\end{align}

Let us see the eigenvalues of $\Lambda$ when $N \ge 3$. 
The four eigenstates of $\Lambda$ are the Bell states:
\begin{align}
    \left\{ 
    \begin{array}{ll}
       \ket{\Psi_{\pm}} &=  \frac{1}{\sqrt{2}}\left( \ket{00}\pm \ket{11} \right)  \\
       \ket{\Phi_{\pm}} &= \frac{1}{\sqrt{2}}\left( \ket{01}\pm \ket{10} \right)  \\
    \end{array}
    \right.. 
\end{align}
We find that the eigenvalues associated with $\ket{\Psi_{\pm}}$ and $\ket{\Phi_{-}}$ 
are all negative, and the eigenvalue of $\ket{\Phi_{+}}$  is given by 
\begin{align}
   \lambda_{\Phi_+} = \frac{1}{N}-\frac{1}{4} =
    \left\{ \begin{array}{l}
                    > 0,\ (N=3) \\
                    \le 0, \ (N \ge 4) \\
                \end{array}
     \right. .
\end{align}

Therefore, in the case of $N=3$, the optimal comparison success probability is given by 
 $P_{\circ}^{\rm opt} = 3/4$ with the optimal POVM 
$E^{\rm opt}_{=} = \ket{\Phi_+}\bra{\Phi_+}$.
If we take the discrimination strategy, 
$E_{=}^{\rm disc} = \sum_{k=0}^2 e_k^{\rm opt} \otimes e_k^{\rm opt}$, 
we find $\tr E_{=}^{\rm disc} \Lambda$  vanishes. This implies that 
$P_{\circ}^{\rm disc} = 2/3 $, which is strictly less than $P_{\circ}^{\rm opt} = 3/4$. 
Thus the state comparison is not reduced to the discrimination problem in this case.  
It should be noted that the success probability 2/3 of the discrimination strategy can 
be obtained by a simpler strategy, where $E_{=}$ is set to zero and 
therefore $E_{\ne}=\mbold{1}$; namely, 
we always infer the two systems are in different states without performing any measurement 
(no-measurement strategy). 

When $N \ge 4$, no eigenvalue of $\Lambda$ is positive.  
Therefore the optimal comparison success probability is given by 
\begin{align}
 P_{\circ}^{\rm opt} = 1-1/N, 
\end{align}
with $E_{=}^{\rm opt} = 0$. 
Thus, rather surprisingly, we find  that the optimal strategy is just the no-measurement strategy. 
The discrimination strategy gives a worse result since 
$\tr E_{=}^{\rm disc} \Lambda$ is negative.  
The size relation of the success probabilities by the three strategies is summarized 
in Table \ref{tab_P}.

In general, we expect that the no-measurement strategy performs better when the 
number of possible states is large, since the probability of selecting two different states 
dominates.  In what follows, we present a sufficient condition for the no-measurement 
strategy to be optimal. For general $N$ pure states 
$\ket{\phi_k}(k=0,\ldots,N-1)$ with equal prior probabilities, we write the 
$\Lambda$ operator corresponding to Eq.~(\ref{eq_Lambda}) in the following form:
\begin{align}
  \Lambda = \frac{2}{N}R -r \otimes r,
\end{align}
where we introduced two normalized density operators $R$ and $r$ defined as  
\begin{align}
     R & \equiv \frac{1}{N} \sum_{k=0}^{N-1}\ket{\phi_k,\phi_k}\bra{\phi_k,\phi_k}, \\
     r & \equiv  \frac{1}{N}\sum_{k=0}^{N-1} \ket{\phi_k}\bra{\phi_k}.
\end{align}
Let $\lambda_{\max}$ and $\lambda_{\min}$ be the maximum and minimum of all 
nonzero eigenvalues of $r$, respectively. Note that $\lambda_{\min} > 0$. 
We show that if 
\begin{align} 
    \lambda_{\min} \ge \sqrt{\frac{2\lambda_{\max}}{N}},  \label{eq_sufficient}
\end{align} 
then $\Lambda$ is negative semidefinite, and consequently the no-measurement 
strategy is optimal.  
Let $V_r$ be the support of $r$. Clearly, the support of $r \otimes r$ is given by 
$V_r^{\otimes 2}$, and the support of $R$ is a subspace of 
$V_r^{\otimes 2}$. Now take an arbitrary vector $\ket{\Phi}$ in the total space 
and express it as $\ket{\Phi}=\ket{\Phi_\parallel}+\ket{\Phi_\perp}$, where
$\ket{\Phi_\parallel} \in V_r^{\otimes 2}$ and $\ket{\Phi_\perp}$ is the 
component perpendicular to $V_r^{\otimes 2}$.  
Assume the condition  (\ref{eq_sufficient}) holds. Then we observe 
\begin{align}
   \braket{\Phi | \Lambda |\Phi} &= 
           \frac{2}{N} \braket{\Phi_\parallel | R |\Phi_\parallel} 
          - \braket{\Phi_\parallel | r \otimes r |\Phi_\parallel}
                                      \nonumber \\
    & \le \left( \frac{2}{N}\lambda_{\max} - \lambda_{\min}^2 \right) 
              \braket{\Phi_\parallel | \Phi_\parallel} \le 0, 
\end{align}
where we used the fact that the maximum eigenvalue of $R$ never exceeds 
that of $r$, $\lambda_{\max}$, which is proved in the Appendix. 
This completes the proof. 

In the example considered in this section, we find that 
$\lambda_{\max}=\lambda_{\min} = 1/2$ for $N \ge 2$, implying that  
the sufficient condition (\ref{eq_sufficient}) is fulfilled for $N \ge 4$.  
This perfectly agrees with the result obtained previously by detailed calculations.  

\begingroup
\squeezetable
\begin{table}
\caption{The relation of the comparison success probabilities. 
$P_{\circ}^{\rm opt}$ is the optimal success probability. 
$P_{\circ}^{\rm disc}$ is the one based on the discrimination strategy, 
and the result of the no-measurement strategy ($E_{=}=0,\ E_{\ne}=\mbold{1}$) 
is denoted by $P_{\circ}^{\rm no}$.
 }
\begin{ruledtabular}
\begin{tabular}{lll}
$N=2$ & $N=3$ & $N \ge 4$ \\ 
\hline
$P_{\circ}^{\rm no} <  P_{\circ}^{\rm disc} = P_{\circ}^{\rm opt}$  &
$P_{\circ}^{\rm no} = P_{\circ}^{\rm disc} < P_{\circ}^{\rm opt}$  & 
$P_{\circ}^{\rm disc} <  P_{\circ}^{\rm no} = P_{\circ}^{\rm opt}$  \\  
\end{tabular}
\end{ruledtabular}
\label{tab_P} 
\end{table}
\endgroup

\section{Two-state comparison with error margin}
\label{sec_error_margin}
In this section we introduce an error margin in the problem of state comparison. 
If the error margin is sufficiently large, this scheme is reduced to the minimum-error 
comparison studied in the preceding sections. If the error margin is set 0, which implies 
that no error is allowed, the scheme is just unambiguous comparison.   

We consider two pure states $\ket{\phi_1}$ and $\ket{\phi_2}$ with 
equal prior probabilities. 
We first summarize the results of 
the discrimination problem between these states with an error margin $\mu$. 
The POVM now consists of  $\{e_1,e_2,e_?\}$ with $e_?$ associated with the 
inconclusive result.  
The task is to maximize the discrimination success probability
\begin{align}
   Q_{\circ} = \frac{1}{2} \braket{\phi_1 | e_1 | \phi_1} 
                             + \frac{1}{2} \braket{\phi_2 | e_2 | \phi_2},
\end{align}
subject to the condition that the error probability does not exceed an error margin $\mu$
\begin{align}
   Q_{\times} =  \frac{1}{2} \braket{\phi_1 | e_2 | \phi_1} 
                             + \frac{1}{2} \braket{\phi_2 | e_1 | \phi_2} \le \mu, 
\end{align}
and the POVM condition $e_1 + e_2 \le \mbold{1}$. 
This problem was solved in Ref.~\cite{Hayashi08}. The results are
\begin{align}
   Q_{\circ}^{\rm opt} &= \left\{ 
       \begin{array}{ll} 
             \frac{1}{2} \left( 1+\sqrt{1-|\braket{\phi_1 | \phi_2}|^2} \right) 
                             & (\mu_c \le \mu \le 1), \\
             \left( \sqrt{\mu}+\sqrt{ 1-|\braket{\phi_1 | \phi_2 }|} \right)^2 
                            & (0 \le \mu \le \mu_c), \\
       \end{array}                \right.                         \\
    Q_{\times}^{\rm opt} &= \left\{ 
       \begin{array}{ll}
              \mu_c & (\mu_c \le \mu \le 1), \\
              \mu    & (0 \le \mu \le \mu_c),  \\
      \end{array}                \right. 
\end{align}
where 
\begin{align}
   \mu_c = \frac{1}{2} \left(  1-\sqrt{1-|\braket{\phi_1 | \phi_2}|^2} \right). 
\end{align}
For the case of general prior probabilities, see  Ref.~\cite{Sugimoto09}.

In what follows we consider the comparison problem of $\ket{\phi_1}$ and $\ket{\phi_2}$ 
with equal prior probabilities. The POVM now includes an element $E_?$ for the inconclusive 
result in addition to  $E_{=}$ and $E_{\ne}$.  
In the discrimination strategy, the POVM takes the form
\begin{align}
    E_{=}^{\rm disc}     &= e_1 \otimes e_1 + e_2 \otimes e_2, \\
    E_{\ne}^{\rm disc} &= e_1 \otimes e_2 + e_2 \otimes e_1, \\
    E_{?}^{\rm disc}     &= 1- E_{=}^{\rm disc}- E_{\ne}^{\rm disc}.
\end{align}
In this strategy, the comparison is successful if and only if
the two discrimination inferences for subsystems are either both correct or both wrong. 
Therefore the comparison success probability $P_{\circ}$ is given by
\begin{align}
    & P_{\circ}^{\rm disc} = \left(Q_{\circ}^{\rm opt}\right)^2
                                               +\left(Q_{\times}^{\rm opt}\right)^2
                                                 \\
     & = \left\{ 
                  \begin{array}{ll}
                        1-\frac{1}{2}|\braket{\phi_1 | \phi_2}|^2 \ 
                                                                   (\mu_c \le \mu \le 1), \\
                        \left( \sqrt{\mu}+ \sqrt{1-|\braket{\phi_1 | \phi_2}|} \right)^4
                        + \mu^2 \ (0 \le \mu \le \mu_c).  \\
                 \end{array}
            \right.             \nonumber
\end{align}
The comparison in this scheme produces a wrong outcome if one discrimination is 
correct whereas the other is wrong. Thus we obtain 
\begin{align}
    & P_{\times}^{\rm disc} = 2Q_{\circ}^{\rm opt}Q_{\times}^{\rm opt} 
                                \nonumber \\
            &= \left\{ \begin{array}{ll}
                                   \frac{1}{2}|\braket{\phi_1 | \phi_2}|^2 \ (\mu_c \le \mu \le 1), \\
                                   2\mu\left( \sqrt{\mu}+\sqrt{1-|\braket{\phi_1 | \phi_2}|} \right)^2
                                        \  (0 \le \mu \le \mu_c). \\
                              \end{array}
                  \right.  
\end{align}
We rewrite those results in terms of the margin $m$ for the erroneous 
comparison probability; that is, under the error margin condition given by 
\begin{align}
    P_{\times} \le m, 
\end{align} 
the discrimination strategy gives the following results:
\begin{align}
   & P_{\circ}^{\rm disc}  \nonumber \\ 
   &=    \left\{ 
                  \begin{array}{ll}
                        1-\frac{1}{2}|\braket{\phi_1 | \phi_2}|^2 \ 
                                                                   (m_c \le m \le 1), \\
                        \left( \sqrt{\mu}+ \sqrt{1-|\braket{\phi_1 | \phi_2}|} \right)^4
                        + \mu^2 \ (0 \le m \le m_c),  \\
                 \end{array}
            \right.               \\
   & P_{\times}^{\rm disc} \nonumber \\
   &=   \left\{
                  \begin{array}{ll}
                        m_c  \ (m_c \le m \le 1), \\
                        2\mu\left( \sqrt{\mu}+\sqrt{1-|\braket{\phi_1 | \phi_2}|} \right)^2
                                       \ (0 \le m \le m_c), \\ 
                  \end{array}
          \right. 
\end{align}
where the critical error margin $m_c$ is defined by
\begin{align}
      m_c =  \frac{1}{2}|\braket{\phi_1 | \phi_2}|^2, 
                     \label{eq_mc}
\end{align}
and the margin $\mu$ in the discrimination process is related to the margin $m$  
in the comparison in the following way:
\begin{align}
      2\mu\left( \sqrt{\mu}+\sqrt{1-|\braket{\phi_1 | \phi_2}|} \right)^2 = m.
\end{align}

Now we will determine the optimal comparison success probability with an error margin $m$. 
The task is to maximize the success probability
\begin{align} 
    P_{\circ} = \frac{1}{2}\tr E_{=} \rho_{=} +\frac{1}{2}\tr E_{\ne}\rho_{\ne},
\end{align}
subject to the error margin condition
\begin{align}
    P_{\times} = \frac{1}{2}\tr E_{\ne} \rho_{=} +  \frac{1}{2}\tr E_{=}\rho_{\ne} \le m, 
\end{align}
and the POVM conditions
\begin{align}
   & E_{=}, E_{\ne} \ge 0,\\
   &  E_{=} +E_{\ne} \le \mbold{1},  \label{eq_POVMcondition}
\end{align}
where $\rho_{=}$ and $\rho_{\ne}$ are density operators defined to be  
\begin{align}
    \rho_{=} &=  \frac{1}{2}\sum_{k=1}^2 \ket{\phi_k,\phi_k}\bra{\phi_k,\phi_k}, 
                         \\
    \rho_{\ne} &= \frac{1}{2}\sum_{k,j=1(k \ne j)}^2 
                                     \ket{\phi_k,\phi_j}\bra{\phi_j,\phi_k}.
\end{align}

This is a discrimination problem with an error margin between two 
mixed states, which is generally hard to treat analytically.  
However, we can obtain analytical  results 
by using two useful exchange-type symmetries in the problem. 

The first symmetry we consider is concerned with the system swap operation $\Pi$, whose 
action is $\Pi \ket{\phi} \otimes \ket{\psi} = \ket{\psi} \otimes \ket{\phi}$ 
for any state $\ket{\phi}$ and $\ket{\psi}$. 
The second symmetry is a sort of state exchange symmetry. 
Suppose we choose phases of the states so that $\braket{\phi_1 | \phi_2}$ 
is real. Then there exists a unitary $U$ such that 
$U\ket{\phi_1}=\ket{\phi_2}$ and $U\ket{\phi_2}=\ket{\phi_1}$. 
The state exchange operator $\Gamma$ is defined to be $\Gamma = U \otimes U$. 
It is clear that $\rho_{=}$ and $\rho_{\ne}$ are invariant under these two operations.
\begin{align}
   & \Pi \rho_{=} \Pi^\dagger = \rho_{=}
                  ,\ \Gamma \rho_{=} \Gamma^\dagger = \rho_{=},
                           \\
   & \Pi \rho_{\ne} \Pi^\dagger = \rho_{\ne}
                          ,\ \Gamma \rho_{\ne} \Gamma^\dagger = \rho_{\ne}. 
\end{align}

Therefore, the optimal success probability is achieved by a POVM which is invariant 
under the system swap and state exchange operations.
\begin{align}
   & \Pi E_{=} \Pi^\dagger = E_{=}
                  ,\ \Gamma E_{=} \Gamma^\dagger = E_{=},
                        \label{eq_POVMinvariance1}   \\
   & \Pi E_{\ne} \Pi^\dagger = E_{\ne}
                          ,\ \Gamma E_{\ne} \Gamma^\dagger = E_{\ne}.
                      \label{eq_POVMinvariance2}
\end{align}

The space $V = \mathbb{C}^2 \otimes \mathbb{C}^2$ can be decomposed 
into three orthogonal subspaces according to the symmetries with respect to $\Pi$ and 
$\Gamma$.
 \begin{align}
    V = V_{++} \oplus V_{+-} \oplus V_{--}, 
\end{align}
where $V_{\pi,\gamma}$ is the eigenspace in which the eigenvalue of $\Pi$ is 
$\pi$ and the eigenvalue of $\Gamma$ is $\gamma$. 
Note that $\pi=-1$ immediately implies $\gamma=-1$.
The space $V_{++}$ is two-dimensional and spanned by 
\begin{align}
      \ket{X_1} &= \ket{\phi_1,\phi_1}+\ket{\phi_2,\phi_2}, \\
      \ket{X_2} &= \ket{\phi_1,\phi_2}+\ket{\phi_2,\phi_1}, 
\end{align}
which are not orthogonal in general. 
The spaces $V_{+-}$ and $V_{--}$ are both one-dimensional, and 
consist of scalar multiples of 
\begin{align}
    \ket{Y_{+}} = \ket{\phi_1,\phi_1}-\ket{\phi_2,\phi_2}, 
\end{align}
and 
\begin{align}
    \ket{Y_{-}} = \ket{\phi_1,\phi_2}-\ket{\phi_2,\phi_1}, 
\end{align}
respectively.

By the invariance of POVM, Eqs.~(\ref{eq_POVMinvariance1}) and (\ref{eq_POVMinvariance2}), we can write 
\begin{align}
   E_{=} &= \tilde  E_{=} + \alpha_{+}\ket{Y_{+}}\bra{Y_{+}}
                                        + \alpha_{-}\ket{Y_{-}}\bra{Y_{-}}, 
                                            \\
   E_{\ne} &= \tilde  E_{\ne} + \beta_{+}\ket{Y_{+}}\bra{Y_{+}}
                                        + \beta_{-}\ket{Y_{-}}\bra{Y_{-}}, 
\end{align}
where $\alpha_{\pm},\ \beta_{\pm} \ge 0$ and $\tilde  E_{=}$ and $\tilde  E_{\ne}$ 
are semipositive definite operators on the subspace $V_{++}$. 
Then $P_{\circ}$ and $P_{\times}$ read 
\begin{align}
   P_{\circ} = \frac{1}{8} \Big( &
                  \braket{X_1|\tilde E_{=}|X_1}+ \braket{X_2|\tilde E_{\ne}|X_2} 
                                                              \nonumber \\
               & + \alpha_{+}|\braket{Y_+|Y_+}|^2+  \beta_{-}|\braket{Y_-|Y_-}|^2
                                           \Big),                      \\
   P_{\times} = \frac{1}{8} \Big( &
                  \braket{X_1|\tilde E_{\ne}|X_1}+ \braket{X_2|\tilde E_{=}|X_2} 
                                                              \nonumber \\
               & + \beta_{+}|\braket{Y_+|Y_+}|^2+  \alpha_{-}|\braket{Y_-|Y_-}|^2
                                           \Big).
\end{align}

The POVM condition Eq.~(\ref{eq_POVMcondition}) can also be decomposed into 
the conditions in subspaces. In  $V_{+-}$, it is given by 
\begin{align}
     (\alpha_+ + \beta_+)\ket{Y_+}\bra{Y_+} \le \mbold{1}.
\end{align}
In order to maximize $P_{\circ}$ subject to the condition $P_{\times} \le m$, 
it is clear that we should have 
\begin{align}
    \alpha_+^{\rm opt} = \frac{1}{\braket{Y_+|Y_+}},\ \beta_+^{\rm opt}=0.
\end{align}
Similarly, from the POVM condition in $V_{--}$, we obtain 
\begin{align}
       \alpha_-^{\rm opt} = 0,\ \beta_-^{\rm opt}= \frac{1}{\braket{Y_-|Y_-}}.
\end{align}

The POVM operators $\tilde  E_{=}$ and $\tilde  E_{\ne}$ are yet to be 
determined. This should be carried out by maximizing
\begin{align}
 P_{\circ} =& \frac{1+|\braket{\phi_1|\phi_2}|^2}{4}
         \left(\braket{\tilde X_1|\tilde E_{=}|\tilde X_1}
                   + \braket{\tilde X_2|\tilde E_{\ne}|\tilde X_2} \right)
                            \nonumber \\
      &+ \frac{1-|\braket{\phi_1|\phi_2}|^2}{2},   
\end{align}
subject to the error-margin condition given by
\begin{align}
   P_{\times} = & \frac{1+|\braket{\phi_1|\phi_2}|^2}{4}
         \left(\braket{\tilde X_1|\tilde E_{\ne}|\tilde X_1}
                   + \braket{\tilde X_2|\tilde E_{=}|\tilde X_2} \right)
                         \nonumber \\
                      < & m,
\end{align}
where $\ket{\tilde X_k}$ is the normalized state of $\ket{X_k}$  ($k=1,2$).
The POVM condition in $V_{++}$ is 
\begin{align}
    \tilde E_{=} + \tilde E_{\ne} \le \mbold{1}.
\end{align}
The problem in this form is, up to additive and multiplicative constants, equivalent to a discrimination problem with an error margin between 
two pure states $\ket{\tilde X_1}$ and $\ket{\tilde X_2}$ with equal prior probabilities. 
We can employ the optimal solution summarized at the beginning of this section. 
Note that the error margin in this equivalent discrimination problem should be taken as 
$2m/(1+|\braket{\phi_1|\phi_2}|^2)$, and the inner product of the states 
to be discriminated is given by
\begin{align}
\braket{\tilde X_1|\tilde X_2}=  \frac{2\braket{\phi_1|\phi_2}}
                                                              {1+|\braket{\phi_1|\phi_2}|^2}.
\end{align}

Thus we finally obtain the optimal comparison success probability $P_{\circ}^{\rm opt}$ 
with an error margin $m$. 
\begin{align}
  &P_{\circ}^{\rm opt}  \nonumber  \\
  &= \left\{ 
      \begin{array}{l}
            1-\frac{1}{2}|\braket{\phi_1|\phi_2}|^2 \ (m_c \le m \le 1), \\
            \frac{1}{2}+\frac{1}{2}\left( \sqrt{2m}+1\right) 
                         \left( \sqrt{2m}+1-2|\braket{\phi_1|\phi_2}| \right) \\
             \hspace{10em}             (0 \le m \le m_c), \\
      \end{array}
                                        \right.
\end{align}
where the critical error margin is given by $m_c=|\braket{\phi_1 | \phi_2}|^2/2$, 
which is the same as the one in the discrimination strategy.  When $m$ is greater than 
$m_c$, the optimal success probability is given by that of the minimum error scheme  
Eq.~(\ref{eq_minimumerror}).  The result of unambiguous comparison can 
be obtained be setting $m=0$.  We obtain 
\begin{align}
    P_{\circ}^{\rm opt}(\text{unamb}) = 1-|\braket{\phi_1|\phi_2}|, 
\end{align}
which agrees with the result of Ref.~\cite{Barnett03}.  
This probability happens to equal the unambiguous discrimination probability 
$Q_{\circ}^{\rm opt}(\text{unamb})$. Note that the discrimination strategy 
gives $P_{\circ}^{\rm disc}(\text{unamb}) = ( Q_{\circ}^{\rm opt}(\text{unamb}))^2$, 
which is strictly less than the optimal success probability $P_{\circ}^{\rm opt}(\text{unamb})$. 
Thus, as emphasized in Ref.~\cite{Barnett03}, the unambiguous state comparison is not 
reduced to discrimination task for subsystems. 
What we have found is that this is also true for a general error margin as long as the 
error-margin condition is active; namely, we can show that 
$ P_{\circ}^{\rm opt} > P_{\circ}^{\rm disc}$ for $0 \le m < m_c$. 
In Fig.~\ref{fig_1} we display  $P_{\circ}^{\rm opt}$ and  $P_{\circ}^{\rm disc}$ as 
functions of error margin $m$ in the case of  $|\braket{\phi_1|\phi_2}|=0.8$. 

\begin{figure} 
\includegraphics[width=9cm]{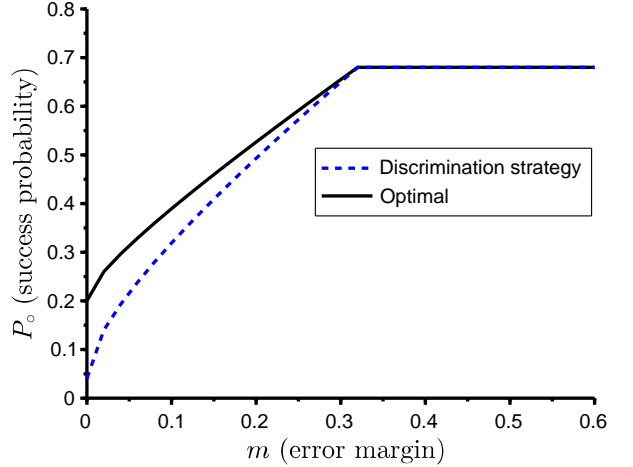}
\caption{(Color online) The comparison success probability vs error margin. 
The solid line is the optimal success probability $P_{\circ}^{\rm opt}$. 
The dashed line is the success probability $P_{\circ}^{\rm disc}$ obtained by 
the discrimination strategy. In this example, we assume $|\braket{\phi_1|\phi_2}|=0.8$, 
implying $m_c = 0.32$.}
\label{fig_1}
\end{figure}

\section{Summary and Concluding remarks}
\label{sec_concluding}
The aim of this paper was to investigate the performance of the discrimination strategy in 
the comparison task of known quantum states. In the case of minimum-error comparison of 
two states, the discrimination strategy always gives the optimal result. 
We have shown that this result can be generalized for mixed states with arbitrary prior probabilities. However, if the number of possible states is greater than two, this is no longer 
true. In some cases, the optimal comparison strategy is simply that we infer the two 
systems are in different states without any measurement. 
Rather surprisingly the discrimination strategy performs worse than this 
``no-measurement'' strategy.
A sufficient condition for this phenomenon to occur was presented.

We have also investigated how the constraint on the error probability (error margin) 
affects the performance of the discrimination strategy, and found that 
the discrimination strategy is optimal only in the minimum-error case.
In the case where the error-margin condition is active (including the unambiguous scheme),  
the optimal comparison requires collective measurement on the whole system. 

In this paper we assumed that the states are selected  from a known set of states. 
We can consider a different scheme; that is, we have no classical knowledge of the 
possible states but some number of copies of the states are available instead 
\cite{Hayashi05,Bergou05,Hayashi06}. 
The corresponding discrimination task is sometimes called quantum state identification. 
It will be of interest in future studies to extend our investigation to the 
``identification'' strategy in the state comparison in this setting.

\appendix*
\section{Maximum eigenvalues of $R$ and $r$}
We prove the following theorem which is used in the end of Sec.~\ref{sec_many}:
\begin{theorem}
Let $\{\ket{\phi_0^A},\ket{\phi_1^A},\ldots,\ket{\phi_{N-1}^A}\}$ and
$\{\ket{\psi_0^B},\ket{\psi_1^B},\ldots,\ket{\psi_{N-1}^B}\}$ be sets of 
$N$ normalized pure states of system A and B, respectively. 
Define three density operators $r_{\phi}^A$, $r_{\psi}^B$, and $R^{AB}$. 
\begin{align*}
   r_{\phi}^A & \equiv  \frac{1}{N}\sum_{k=0}^{N-1} \ket{\phi_k^A}\bra{\phi_k^A},
\end{align*}
\begin{align*}
   r_{\psi}^B & \equiv  \frac{1}{N}\sum_{k=0}^{N-1} \ket{\psi_k^B}\bra{\psi_k^B}, 
\end{align*}
and
\begin{align*}
   R^{AB} & \equiv \frac{1}{N} \sum_{k=0}^{N-1}
                               \ket{\phi_k^A,\psi_k^B}\bra{\phi_k^A,\psi_k^B}.
\end{align*}
Then we have 
\begin{align}
   \lambda_{\max}(R^{AB}) 
     \le \min \left\{ \lambda_{\max}(r_{\phi}^A), \lambda_{\max}(r_{\psi}^B) \right\},
\end{align}
where $\lambda_{\max}(\Omega)$ stands for the maximum eigenvalue of an 
operator $\Omega$.
\end{theorem}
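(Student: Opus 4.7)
The plan is to obtain each of the two inequalities $\lambda_{\max}(R^{AB})\le \lambda_{\max}(r_{\psi}^B)$ and $\lambda_{\max}(R^{AB})\le \lambda_{\max}(r_{\phi}^A)$ by a single Cauchy--Schwarz argument, then take the minimum. By symmetry of the roles of $A$ and $B$ it suffices to establish $\lambda_{\max}(R^{AB})\le \lambda_{\max}(r_{\psi}^B)$.

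I would start from a unit eigenvector $\ket{\Phi}$ of $R^{AB}$ attaining $\lambda_{\max}(R^{AB})$, so that
\begin{align*}
\lambda_{\max}(R^{AB})=\braket{\Phi|R^{AB}|\Phi}=\frac{1}{N}\sum_{k=0}^{N-1}\bigl|\braket{\Phi|\phi_k^A,\psi_k^B}\bigr|^2.
\end{align*}
Next, fix any orthonormal basis $\{\ket{a_i}\}$ of system $A$ and expand $\ket{\Phi}=\sum_i\ket{a_i}\otimes\ket{b_i}$ with (in general non-normalized) vectors $\ket{b_i}$ in $\mathcal{H}^B$ satisfying $\sum_i\braket{b_i|b_i}=1$. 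Then $\braket{\Phi|\phi_k^A,\psi_k^B}=\sum_i\braket{a_i|\phi_k^A}\braket{b_i|\psi_k^B}$, and the Cauchy--Schwarz inequality together with $\sum_i|\braket{a_i|\phi_k^A}|^2=\braket{\phi_k^A|\phi_k^A}=1$ yields
\begin{align*}
\bigl|\braket{\Phi|\phi_k^A,\psi_k^B}\bigr|^2\le \sum_i\bigl|\braket{b_i|\psi_k^B}\bigr|^2=\braket{\psi_k^B|\sigma^B|\psi_k^B},
\end{align*}
where $\sigma^B\equiv\sum_i\ket{b_i}\bra{b_i}$ is positive semidefinite with $\tr\sigma^B=1$.

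Summing over $k$ and dividing by $N$ turns the right-hand side into $\tr(\sigma^B r_{\psi}^B)$. Since $\sigma^B$ is a density operator, the standard bound $\tr(\sigma^B r_{\psi}^B)\le\lambda_{\max}(r_{\psi}^B)$ closes the argument and gives $\lambda_{\max}(R^{AB})\le\lambda_{\max}(r_{\psi}^B)$. Reversing the roles of $A$ and $B$ (i.e.\ expanding $\ket{\Phi}$ in an orthonormal basis of $\mathcal{H}^B$ instead) gives the companion bound $\lambda_{\max}(R^{AB})\le\lambda_{\max}(r_{\phi}^A)$, and taking the minimum yields the claim.

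The only subtle step is recognizing that the naive guess ``$\lambda_{\max}$ of a bipartite operator is bounded by $\lambda_{\max}$ of its partial trace'' is false in general (a pure entangled state is a counterexample), so one must exploit the specific structure of $R^{AB}$ as an \emph{average of product} pure-state projectors. The Cauchy--Schwarz step is what uses exactly that structure: the factor $\braket{a_i|\phi_k^A}$ is separated from $\braket{b_i|\psi_k^B}$ precisely because each term of $R^{AB}$ is a product state, and the $\phi$--part is then eliminated using normalization.
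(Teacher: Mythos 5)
Your proof is correct and is essentially the paper's own argument: your Cauchy--Schwarz step is just the coordinate form of the operator inequality $\ket{\phi_k^A,\psi_k^B}\bra{\phi_k^A,\psi_k^B}\le \ket{\phi_k^A}\bra{\phi_k^A}\otimes\mbold{1}^B$ used in the Appendix, and your $\sigma^B$ is precisely the reduced density operator $\rho^B={\rm tr}_A\ket{\Phi}\bra{\Phi}$ appearing there (with the roles of $A$ and $B$ swapped), followed by the same bound ${\rm tr}(\sigma\, r)\le\lambda_{\max}(r)$. The only cosmetic difference is that you evaluate at a maximizing eigenvector while the paper bounds $\braket{\Phi|R^{AB}|\Phi}$ for arbitrary normalized $\ket{\Phi}$ and then maximizes; this changes nothing.
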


\begin{proof}
For any normalized state $\ket{\Phi^{AB}}$ of system AB, we observe
\begin{align*}
  & \braket{\Phi^{AB}| R^{AB} |\Phi^{AB} } \\
  & = \frac{1}{N} \sum_{k=0}^{N-1}
           \braket{ \Phi^{AB} \ket{\phi_k^A,\psi_k^B}\bra{\phi_k^A,\psi_k^B} 
                                                              \Phi^{AB} } \\
  & \le \frac{1}{N} \sum_{k=0}^{N-1}
           \braket{ \Phi^{AB}| 
              \left(  \ket{\phi_k^A}\bra{\phi_k^A}  \otimes \mbold{1}^B  \right) 
                                                              | \Phi^{AB} } \\
  & = {\rm tr}_A 
     \left( \frac{1}{N} \sum_{k=0}^{N-1}\ket{\phi_k^A}\bra{\phi_k^A} \right) \rho^A \\
  & \le \lambda_{\max}(r_{\phi}^A), 
\end{align*}
where $\rho^A$ is the reduced density operator defined by 
\begin{align*}
  \rho^A = {\rm tr}_B \ket{\Phi^{AB}}\bra{\Phi^{AB}}.
\end{align*} 
Since 
\begin{align*}
   \lambda_{\max}(R^{AB}) = \max_{|\Phi^{AB}|=1}
                           \braket{\Phi^{AB}| R^{AB} |\Phi^{AB} }, 
\end{align*}
we obtain $\lambda_{\max}(R^{AB}) \le \lambda_{\max}(r_{\phi}^A)$. 
Similarly we can show that $\lambda_{\max}(R^{AB}) \le \lambda_{\max}(r_{\psi}^B)$. 
Combining these two results we obtain the desired inequality.
\end{proof}


\end{document}